\documentclass{llncs}

\usepackage[title]{appendix}
\let\cleardoublepage\clearpage

\usepackage{amsfonts}
\usepackage{amsmath}
\usepackage{float}
\usepackage{graphicx}

\usepackage{subcaption}
\usepackage{balance}

\floatstyle{ruled}
\newfloat{algorithm}{th}{lop}
\floatname{algorithm}{Algorithm Sketch}

\spnewtheorem*{proofsketch}{Proof Sketch}{\itshape}{}

\begin{document}

\title{Network Scaffolding for Efficient Stabilization of the \textsc{Chord} Overlay Network\thanks{An early version of this work appeared as a Brief Announcement in SPAA 2021}}

\author{Andrew Berns}
\institute{Department of Computer Science\\University of Northern Iowa\\Cedar Falls, IA USA\\
\email{andrew.berns@uni.edu}}

\maketitle


\begin{abstract}
  Overlay networks, where nodes communicate with neighbors over logical links consisting of zero or more physical links, have become an important part of modern networking.  From data centers to IoT devices, overlay networks are used to organize a diverse set of processes for efficient operations like searching and routing.  Many of these overlay networks operate in fragile environments where faults that perturb the logical network topology are commonplace.  Self-stabilizing overlay networks offer one approach for managing these faults, promising to build or restore a particular topology from any weakly-connected initial configuration.
  
  Designing efficient self-stabilizing algorithms for many topologies, however, is not an easy task.  For non-trivial topologies that have desirable properties like low diameter and robust routing in the face of node or link failures, self-stabilizing algorithms to date have had at least linear running time or space requirements.  In this work, we address this issue by presenting an algorithm for building a \textsc{Chord} network that has polylogarithmic time and space complexity.  Furthermore, we discuss how the technique we use for building this \textsc{Chord} network can be generalized into a ``design pattern'' for other desirable overlay network topologies.
  
  \keywords{Topological self-stabilization \and Overlay networks \and Fault-tolerant distributed systems}
  
\end{abstract}

\section{Introduction}
As computers and network connectivity have become an ubiquitous part of society, the size and scope of distributed systems has grown.  It is now commonplace for these systems to contain hundreds or even thousands of computers spread across the globe connected through the Internet.  To better facilitate common operations for applications, like routing and searching, many distributed systems are built using \emph{overlay networks}, where connections occur over logical links that consist of zero or more physical links.  Overlay networks allow nodes to embed a predictable topology onto their (usually fixed) physical topology, selecting the best network for the application's particular needs.

Complicating the use of these overlay networks, however, is the reality that systems composed of such a wide variety and distribution of devices are more prone to failures caused from problems with the devices or physical links.  For instance, fiber optic cables can be severed, power outages can cause machines to disconnect without warning, and even intentional user actions like joining or leaving the system on a predictable schedule can result in an incorrectly-configured overlay network causing the client application to fail.

One approach for managing these faults and preventing failures is to design protocols which are resilient to a targeted set of specific system faults, such as nodes joining or leaving the system.  However, the unpredictable nature of these distributed systems makes it difficult to identify and control for every possible fault.  It is for this reason that researchers have turned to \emph{self-stabilizing overlay networks}.  A self-stabilizing overlay network guarantees that after \emph{any} transient fault, a correct topology will eventually be restored.  This type of network can ensure autonomous operation of distributed systems even in the face of a variety of unforeseen transient faults.

\subsubsection{The Problem} Our focus is on building robust self-stabilizing overlay networks efficiently.  More specifically, we are interested in creating \emph{efficient} algorithms that add and delete logical edges in the network to transform an arbitrary weakly-connected initial topology into a correct \emph{robust} topology.  By \emph{efficient}, we mean these algorithms have a time \emph{and} space complexity which is polylogarithmic in the number of nodes in the network.  By \emph{robust}, we mean topologies where the failure of a few nodes is insufficient to disconnect the network.


\subsubsection{Main Results and Significance}
With this paper, we present an efficient self-stabilizing overlay network with desirable practical properties like robustness and low diameter.  In particular, we present a self-stabilizing algorithm for the creation of a \textsc{Chord} network which has expected polylogarithmic space and time requirements.  Note that this is the first work to present an efficient (in terms of time and space) self-stabilizing overlay network for a \emph{robust} topology.  Note that, while our algorithm is deterministic, it depends upon the prior work of Berns~\cite{berns_avatar-2016}, which was randomized, and therefore our results are in expectation.

Our second result is the explicit identification of a ``design pattern'' we call \emph{network scaffolding} for creating self-stabilizing overlay networks.  This pattern has been used in several other works, and the success of this approach, both previously as well as in this work for building \textsc{Chord}, leads us to believe it can be used for many other topologies as well.  Our work is a first step towards fully defining and analyzing this design pattern.  Our goal is that explicit identification of this design pattern can be useful to other researchers and practitioners in the design and implementation of other self-stabilizing overlay networks.

A preliminary version of this work appeared as a brief announcement at SPAA 2021~\cite{berns_chord_spaa_2021}.  In this version, we provide a more detailed discussion of the \textsc{Avatar} background, an improved analysis showing a better bound on the degree expansion, and also provide an extended discussion about the identified design pattern.

\subsubsection{Related Work and Comparisons}
\label{section:related_work}
The past few decades have seen tremendous growth in both the theory and practice of overlay networks.  Some of this work has focused on \emph{unstructured} overlay networks where connections need not satisfy any particular property and there are no constraints on what is considered a ``legal'' topology, such as Napster and Gnutella~\cite{saroiu_napster_2003}.  There are also \emph{semi-structured} overlay networks, where the network topology must satisfy a particular constraint or has a particular property such that there are several possible correct topologies.  For instance, an overlay may be built to mimic a ``small world'' graph~\cite{hui_swop_2004}.

Our work focuses on \emph{structured} overlay networks, where there is exactly one correct topology for any given set of nodes.  While constructing and maintaining the correct topology adds additional work for the algorithm designer, common operations such as routing and searching are much more efficient with these structured networks.  There are many examples of such overlays, including \textsc{Chord}~\cite{stoica_chord_01}, \textsc{Pastry}~\cite{rowstron_pastry_2001}, and \textsc{Tapestry}~\cite{zhao_tapestry_2002}.  These early examples of structured networks, however, provided very limited fault tolerance.

To this end, much previous work has focused on improving the fault tolerance of overlay networks.  One approach has looked at \emph{self-healing} networks, where the network can maintain certain properties while a limited number of faults occur during a fixed time period.  Examples of this work include the Forgiving Tree~\cite{hayes_forgiving-tree_2008}, the Forgiving Graph~\cite{hayes_forgiving-graph_2009}, and DEX~\cite{pandurangan-dex-2016}.  Many of these approaches also use virtual nodes~\cite{trehan_virtual-2012} as done in our work.  More recently, Gilbert et al. presented \textsc{DConstructor}~\cite{gilbert_dconstructor_2020} which is able to build a correct topology from any initial topology and maintain this in the face of some joins and leaves.  G\"{o}tte et al.~\cite{gotte_time-optimal_podc21} also presented an algorithm for transforming a constant-degree network into a tree in $\mathcal{O}(\log n)$ rounds.  The key difference between these works and ours, however, is we use a stronger fault model, requiring our algorithm to build the correct configuration regardless of the initial topology \emph{or} the initial state of the nodes.  This is a paradigm called self-stabilization, which we discuss next.

\emph{Self-stabilizing overlay networks} are those that guarantee a legal configuration will be automatically restored by program actions after \emph{any} transient fault so long as the network is not disconnected.  This is often modeled as the ability for the network to form a correct topology when starting from an arbitrary weakly-connected state.  One of the first such examples of a self-stabilizing overlay network was for the simple structured \textsc{Linear} topology~\cite{onus_linear_07} where nodes were arranged in a ``sorted list''.  Since then, there have been several other self-stabilizing structured overlay networks created, including \textsc{Skip+}~\cite{jacob_skipplus_09} and \textsc{Re-Chord}~\cite{kniesburges_rechord_11}.  Unlike the simple \textsc{Linear} topology, \textsc{Skip+} and \textsc{Re-Chord} maintain several desirable properties for client applications, including low node degree and low diameter.  Unfortunately, their worst-case time (in the case of \textsc{Re-Chord}) or space (in the case of \textsc{Skip+}) complexity is linear in the number of nodes.

To date, we are aware of only one self-stabilizing overlay network that is efficient in terms of both time and space.  Berns presented the \textsc{Avatar} network framework as a mechanism for ensuring a faulty configuration is detectable for a wide variety of networks, and also gave a self-stabilizing algorithm for the construction of a binary search tree~\cite{berns_avatar-2016}.  Our current work builds upon this \textsc{Avatar} network.

A goal of our current work is to identify a general ``design pattern'' which can be used for building self-stabilizing overlay networks.  There has been little work done on identifying these general patterns for overlay network construction.  One exception to this is the \emph{Transitive Closure Framework}~\cite{berns_tcf_11} (TCF), which provides a way to build any locally-checkable topology by detecting a fault, forming a clique, and then deleting those edges which are not required in the correct configuration.  While TCF can create any locally-checkable topology quickly, it requires node degrees to grow to $\mathcal{O}(n)$ during convergence and is therefore not practical for large networks.

While we are interested in structured topologies, there has been work in semi-structured networks.  For instance, self-stabilizing semi-structured networks such as a \emph{small world network}~\cite{kniesburges_small-world_2012} or a \emph{power law network}~\cite{alsulaiman_power-law_2015} have been proposed.  G{\"o}tte et al~\cite{gotte_sirocco19} recently presented an algorithm for quickly converting certain topologies into ones with a logarithmic diameter.  

%
%
%
%
%
%
%
%
%
%
%
%
%
%
%
%
%
%
%
%
%
%
%
%
%
%
%
%
%
%
%
%
%
%
%
%
%
%
%
%
%
%
%
%
%

\section{Preliminaries}
\label{section:prelims}
\subsection{Model of Computation}
We model our distributed system as an undirected graph $G = (V, E)$, with processes being the $n$ nodes of $V$ and the communication links being the edges $E$.  Each node $u$ has a unique identifier $u.id \in \mathbb{N}$, which is stored as immutable data in $u$.  Where clear from the context, we will use $u$ to represent the identifier of $u$.

Each node $u \in V$ has a \emph{local state} consisting of a set of variables and their values, along with its immutable identifier $u.id$.  A node may execute \emph{actions} from its \emph{program} to modify the values of the variables in its local state.  All nodes execute the same program.  Besides modifying its local state, a node can also communicate with its neighbors.  We use the \emph{synchronous message passing} model of computation with bounded communication channels, where computation proceeds in synchronous rounds.  During each round, a node may receive messages sent to it in the previous round from any of its neighbors, execute program actions to update its state, and send messages to any node in its neighborhood $N(u) = \{v \in V:(u,v) \in E\}$.  We assume the communication channels are reliable with bounded delay so that a message is received in some round $i$ if and only if it was sent in round $i - 1$.

In the overlay network model, nodes communicate over logical links that are part of a node's state, meaning a node may execute actions to create or delete edges in $G$.  In particular, in any round a node may delete any edge incident upon it, as well as create any edge to a node $v$ which has been ``introduced'' to it from some neighbor $w$, such that $(u,w)$ and $(w, v)$ are both in $E$.  Said in another way, in a particular round a node may connect its neighbors to one another by direct logical links.

The goal for our computation is for nodes to execute actions to update their state (including modifying the topology by adding and deleting edges) until a legal configuration is reached.  A \emph{legal configuration} can be represented as a predicate over the state of the nodes in the system, and as links are part of a node's state for overlay networks, a legal configuration is defined at least in part by the network topology.  The \emph{self-stabilizing overlay network problem} is to design an algorithm $\mathcal{A}$ such that when executing $\mathcal{A}$ on a connected network with nodes in an arbitrary state, eventually a legal configuration is reached.  This means that a self-stabilizing overlay network will always automatically restore a legal configuration after any transient fault so long as the network remains connected.

\subsection{Performance Metrics}
We analyze the performance of our self-stabilizing overlay network algorithms in terms of both time and space.  For time, we are interested in how quickly the network will be able to recover from a transient fault.  Specifically, we measure the maximum number of (synchronous) rounds that may be required in the worst case to take any set of $n$ nodes from an arbitrary connected configuration to a legal configuration.  This is called the \emph{convergence time}.

The space complexity measure of interest for us is related to the maximum number of neighbors a node might have during convergence that were not present in the initial configuration and are not required in the final configuration.  Said in another way, we are interested in the number of ``extra'' neighbors a node may acquire during convergence.  More specifically, we use the \emph{degree expansion} metric~\cite{berns_avatar-2016}, which is the ratio of the maximum node degree of any node during convergence over the maximum node degree from the initial or final configuration.  The degree expansion helps us to judge how much a node's degree increases as a result of program actions and not from the requirements of the final topology or from a high-degree initial configuration.

Finally we note that, as with many distributed algorithms, we consider an efficient algorithm one which keeps these measures polylogarithmic in the number of nodes in the network.

\section{\textsc{Avatar}}
\label{section:avatar}
Our efficient algorithm for creating the \textsc{Chord} network builds upon previous work on the \textsc{Avatar} overlay network framework.  While a full discussion of this network can be found in the work by Berns~
\cite{berns_avatar-2016}, we present a brief summary and discussion below to provide the necessary background to understand the new contributions of our own work.

\subsection{The \textsc{Avatar} Overlay Network}
The \textsc{Avatar} overlay network framework can be used to define a variation of any particular network topology.  The general idea behind the \textsc{Avatar} framework is to create a dilation-1 embedding of a particular $N$ node \emph{guest network} (with node identifiers in the range of $[0, N)$) onto the $n$ node \emph{host network} (with $n \leq N$).  More specifically, in \textsc{Avatar} each node $u \in V$ from the host network (except the two nodes with the smallest and largest identifiers) simulates or ``hosts'' all nodes from the guest network with identifiers in $u$'s \emph{responsible range}, defined as the range of $[u.id, v.id)$, where $v.id$ is the smallest identifier greater than $u.id$ taken from all nodes in $V$, which we call the \emph{successor} of $u$.  The node $u_0$ with the smallest identifier has a responsible range of $[0, v.id)$ (where again $v$ is the successor of $u_0$), while the node with the largest identifier $u_{n-1}$ has a responsible range of $[u_{n-1}.id, N)$.  To ensure a dilation-1 embedding, for every edge $(a, b)$ in the guest network there exists an edge between the host nodes of $a$ and $b$ in the host network, or the host node for $a$ and $b$ is the same -- that is, either both $a.id$ and $b.id$ are in the responsible range of the same host node, or there exists the edge $(host(a), host(b))$ in the host network such that $a.id$ is in the responsible range of host node $host(a)$ and $b.id$ is in the responsible range of host node $host(b)$.  The definition of some $N$ node guest network $\textsc{Guest}(N)$ along with the constraints on the corresponding edges in the host network define the legal $\textsc{Avatar}(\textsc{Guest}(N))$ network.

%

The use of a guest and host network provides two advantages.  First, the requirement that there is exactly one correct configuration for any given $N$ (meaning the guest network uses nodes $[0, N)$), along with the fact that the successor relationship used in the host network to define the responsible ranges can easily be determined from a node's local state, ensures that any topology is \emph{locally checkable}.  Second, we can design our algorithms (for both stabilization and end-user applications) to execute on the guest network, which has a single predictable configuration for a given $N$, regardless of the node set $V$.  This simplifies both the design and analysis of our algorithms.

As a final comment, we note that the use of \textsc{Avatar} does require all nodes to know $N$, the upper bound on the number of nodes in the network, and that our analysis of convergence time and degree expansion is in terms of this $N$.  Given that all of our algorithms have polylogarithmic time and space requirements, in practice one could easily select an $N$ which was large enough to accommodate any possible node additions while still having a time and space complexity less than many existing algorithms which have complexity at least linear in the actual number of nodes $n$.  Said in another way, even when $N$ is much larger than $n$, our efficient algorithms may still require fewer resources if $\log N \ll n$.  If we consider IPv6, for instance, $\log N$ would be only $128$.

\subsection{\textsc{Avatar}(\textsc{CBT})}
Beyond defining the \textsc{Avatar} framework, Berns also defined the \textsc{Cbt} guest network and a self-stabilizing algorithm for building the $\textsc{Avatar}(\textsc{Cbt})$ network in a polylogarithmic number of rounds with a polylogarithmic degree expansion (both in expectation).  The \textsc{Cbt} topology is simply a complete binary search tree of the specified $N$ nodes.  

\subsubsection{Stabilization}
Our work building the \textsc{Chord} network depends upon the existence of a ``scaffold'' \textsc{Cbt} network.  To be self-stabilizing, we need a way to build this \textsc{Cbt} in a self-stabilizing manner.  This is exactly what is provided by the earlier \textsc{Avatar} work of Berns.  The full description of the self-stabilizing algorithm for $\textsc{Avatar}(\textsc{Cbt})$ can be found in the original work~\cite{berns_avatar-2016}.  We present a short informal summary of the algorithm's operation here to assist in understanding and verifying the correctness of our approach.

The general idea for the $\textsc{Avatar}(\textsc{Cbt})$ algorithm can be described using three components:
\begin{enumerate}
    \item \emph{Clustering}: The first step in the algorithm is for nodes to form clusters.  These clusters begin as a single host node hosting a full $N$ virtual node \textsc{Cbt} network.  In the initial configuration, nodes may not be a part of a cluster, but since \textsc{Avatar} is locally checkable, all faulty configurations of $\textsc{Avatar}(\textsc{Cbt})$ contain at least one node which detects the faulty configuration and will begin forming the single-node clusters.  This fault detection and cluster creation will propagate through the network until eventually all nodes are members of $N$ virtual node \textsc{Cbt} clusters.
    \item \emph{Matching}: The second step of the algorithm is to match together clusters so that they may merge together.  To do this, the root node of the binary tree repeatedly polls the nodes of its cluster, asking them to either find neighboring clusters that are looking for merge partners (called the leader role), or to look at neighboring clusters that can assign them a merge partner (called the follower role).  The role of leader or follower is randomly selected.  Leader clusters will match together all of their followers for merging, creating a matching between clusters that may not be direct neighbors.  This ability to create edges to match non-neighboring clusters allows more matches to occur, and thus more merges, and thus a faster convergence time.
    \item \emph{Merging}: The algorithm then deals with the merging of matched clusters.  To prevent degrees from growing too large, a cluster is only allowed to merge with at most one other cluster at a time.  Once two clusters have matched from the previous step, the roots of the clusters connect as ``partners'' and update their successor pointers based upon the identifier of the host of the root of the other cluster.  One node will have its responsible range become smaller, and this node will send all guest nodes that were in its old responsible range to its partner in the other cluster.  The children of the root nodes are connected, and then they repeat the process of updating successor pointers and passing along guest nodes outside their new responsible range.  Eventually this process reaches the leaves, at which point all nodes in both clusters have updated their responsible ranges and now form a new legal \textsc{Cbt} cluster.
\end{enumerate} 

This process of matching and merging continues until eventually only a single cluster is left, which is the correct $\textsc{Avatar}(\textsc{Cbt})$ network.  We restate the following theorem from the original work and offer a brief sketch of the proof's intuition.

\begin{theorem}
The self-stabilizing algorithm for $\textsc{Avatar}(\textsc{Cbt})$ by Berns~\cite{berns_avatar-2016} has a convergence time of $\mathcal{O}(\log^2 N)$ rounds in expectation, and a degree expansion of $\mathcal{O}(\log^2 N)$ in expectation.
\end{theorem}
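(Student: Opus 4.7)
The plan is to decompose the analysis along the three components of the algorithm (clustering, matching, merging) and then bound the number of matching--merge cycles required to collapse the system into a single cluster. Since every cluster internally stores a full $N$-node \textsc{Cbt} whose depth is $\Theta(\log N)$, I will repeatedly exploit the fact that any top-down or bottom-up sweep of a cluster costs $\Theta(\log N)$ rounds.

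First I would argue that clustering completes quickly. Local checkability of $\textsc{Avatar}(\textsc{Cbt})$ guarantees that within one round at least one host in every faulty component detects the fault and may then instantiate itself as a singleton cluster hosting the entire virtual \textsc{Cbt}. Because hosts can introduce non-neighbors to one another, the ``become a singleton cluster'' signal propagates through each weakly-connected component in $\mathcal{O}(\log N)$ rounds via standard aggregation, so after $\mathcal{O}(\log N)$ rounds every host belongs to some singleton cluster.

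Next I would analyze a single matching--merge cycle and then count cycles. A cycle begins with a polling wave sent by each cluster root down and back up the \textsc{Cbt}, which costs $\mathcal{O}(\log N)$ rounds; the subsequent merge likewise proceeds level-by-level from root to leaves, updating responsible ranges and shipping guest nodes into the partner, and again costs $\mathcal{O}(\log N)$ rounds. The key probabilistic step is to show that a constant fraction of clusters is eliminated per cycle in expectation. Each surviving cluster independently flips a fair coin to play leader or follower; I would argue via a charging argument on the cluster-adjacency graph (which must be connected, since the host graph is weakly connected) that a constant fraction of edges cross leader/follower roles, so a constant fraction of clusters is matched and merged. Consequently the cluster count contracts geometrically and only $\mathcal{O}(\log N)$ cycles suffice, multiplying to the claimed $\mathcal{O}(\log^2 N)$ round bound.

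For the degree expansion I would bound the transient edges contributed by matching and by the child-linking that occurs within a merge. In a single cycle, a host selected as a leader accumulates at most $\mathcal{O}(\log N)$ follower-introductions before the roles rotate (one per level of its \textsc{Cbt}), and the transient child-level connections during merging are simultaneously bounded by the depth of the tree. Summing over $\mathcal{O}(\log N)$ cycles yields $\mathcal{O}(\log^2 N)$ excess neighbors in expectation. The main obstacle I anticipate is precisely this per-cycle bookkeeping: one must show that the geometric-contraction argument really holds against an adversarial initial cluster adjacency graph, and simultaneously that no unlucky host is selected as leader by too many neighbors in the same round, since either failure would inflate both the round count and the degree beyond $\mathcal{O}(\log^2 N)$. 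Handling the second concern will likely require a concentration bound on the number of followers charged to any one leader per matching phase.
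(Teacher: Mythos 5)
Your plan matches the paper's own treatment of this theorem, which is only an intuition sketch deferring the full proof to the cited \textsc{Avatar} work: both decompose the analysis into clustering, matching, and merging, argue that a constant fraction of clusters is eliminated per $\mathcal{O}(\log N)$-round cycle so that $\mathcal{O}(\log N)$ cycles suffice, and bound degree expansion by summing per-cycle contributions. The only cosmetic difference is in the degree accounting --- the paper charges a constant per match plus $\mathcal{O}(\log^2 N)$ within a single merge, while you charge $\mathcal{O}(\log N)$ per cycle over $\mathcal{O}(\log N)$ cycles --- but both arrive at the same bound by essentially the same route.
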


\noindent
\emph{Intuition:} A cluster has a constant probability of being matched and merged with another cluster in $\mathcal{O}(\log N)$ rounds, meaning the number of clusters is reduced by a constant fraction every $\mathcal{O}(\log N)$ rounds in expectation.  This matching and merging only needs to happen $\mathcal{O}(\log N)$ times until we have a single cluster, giving us a time complexity of $\mathcal{O}(\log^2 N)$ rounds in expectation.  The degree of a node can grow during a merge or during the matching process.  However, during a merge a node's degree will grow to at most $\mathcal{O}(\log^2 N)$, and the node's degree will increase by only a constant amount during each match and there are only $\mathcal{O}(\log N)$ such matches in expectation, meaning the degree expansion of the algorithm is also $\mathcal{O}(\log^2 N)$.

\subsubsection{Communication}
The original work on \textsc{Avatar}(\textsc{Cbt}) also defined a communication mechanism to execute on the guest \textsc{Cbt} network.  We will also use this mechanism in our algorithm to ensure edges are added systematically and thus limiting unnecessary degree growth.  In particular, we will use a variant of a \emph{propagation of information with feedback} (PIF) algorithm~\cite{pif_jpdc_2010} which will execute on the (guest) nodes of \textsc{Cbt}.  While the original work was snap-stabilizing, this would not be a requirement in our work.  We are instead simply interested in an organized way to communicate information in waves in a tree.

In PIF, communication happens in waves that are initiated by the root of the binary tree.  First, the root executes a \emph{propagate} action which sends information down the tree level by level.  Once the propagated information reaches the leaves, the leaves begin a \emph{feedback} action, performing some operation and then signalling to their parent that the message has been received and acted upon by all descendants in the tree.  Once the parent of a node has received the message, the node may \emph{clean} and prepare for the next wave.  Once the root receives the feedback wave, it knows the message was successfully received and acted upon by all nodes in the tree, and the root may continue with further PIF waves if necessary.

We will use this communication mechanism to add edges to our network to build \textsc{Chord}.  As the PIF process itself is previously defined, we only need to provide the actions each node will perform for each part, as well as any data that is sent.  In particular, we will say that a tree $T$ executes a $PIF(X)$ wave, meaning the root of tree $T$ will signal to its children that a propagation wave has begun with the $PIF(X)$ message.  Furthermore, we will specify the \emph{propagate} action of $a$, which is what each non-root node $a$ should do when it receives the propagation message $PIF(X)$.  We also specify the \emph{feedback} action of $a$, which is the actions each node $a$ should take when it receives acknowledgements from its children that the most-recent propagation wave has completed and the corresponding feedback wave is underway.
\section{\textsc{Avatar}(\textsc{Chord})}
\label{section:algorithm}
In this section we discuss how we can use the existing \textsc{Avatar}(\textsc{Cbt}) self-stabilizing overlay network as the starting point for the efficient creation of a variant of the \textsc{Chord} overlay network.

\subsection{Overview of Our Approach}
Arguably one of the major barriers to the practical implementation for self-stabilizing overlay networks is the complexity that must be managed when designing and analyzing these networks, particularly when we desire efficient self-stabilization.  For instance, TCF~\cite{berns_tcf_11} is simple and works with any locally-checkable topology, but it requires $\Theta(n)$ space.  One could imagine a simple ``design pattern'' which simply suggests that in every round, a node computes their ideal neighborhood given the information available to them from their state and the state of their neighbors, and then add and delete edges to form this ideal neighborhood.  Unfortunately, analyzing this algorithm in terms of \emph{both} correctness and efficiency is quite difficult as one must consider the implications of a variety of actions on a variety of possible initial configurations.

One approach to managing complexity is to start by building smaller or simpler structures, and then using these to continue towards the final goal.  Consider, for instance, the construction of a large building.  One common approach is to erect a simple scaffold and use this scaffold to build the more complex permanent structure.  As another example, consider the prior work on using \emph{convergence stairs} for analyzing general self-stabilizing algorithms.  In this technique, one first must show the system converges to some weaker predicate $A_0$ from an arbitrary initial configuration, then show it converges to $A_1$ provided it is in $A_0$, then show it converges to $A_2$ provided it is in $A_1$, and so on until you have reached the correct configuration.  These patterns of design and analysis are similar in that they take a complex set of required actions, decompose them into smaller distinct steps, and then rely on prior solutions to the smaller steps to move to the next ones.

In the remainder of this section, we discuss our approach for efficiently creating a self-stabilizing version of the \textsc{Chord} network based upon this idea of scaffolding.  In particular, we shall define the \textsc{Chord} topology, and then discuss how we can use $\textsc{Avatar}(\textsc{Cbt})$ as a starting point for constructing \textsc{Chord}.  We then show how nodes can determine in a short amount of time whether they should be building the ``scaffold'' (\textsc{Cbt}) or the target topology (\textsc{Chord}).

\subsection{\textsc{Chord}(N)}
\label{section:target}
Our target network aims to resolve the lack of robustness of the \textsc{Cbt} scaffold network.  In particular, our target network is an $N$-node \textsc{Chord} network defined as follows:

\begin{definition}
For any $N \in \mathbb{N}$, let $\textsc{Chord}(N)$ be a graph with nodes $[N]$ and edge set defined as follows.  For every node $i$, $0 \leq i < N$, add to the edge set $(i, j)$, where $j = (i + 2^k) \mod N$, $0 \leq k < \log N - 1$.  When $j = (i + 2^k)\mod N$, we say that $j$ is the $k$-th finger of $i$.
\end{definition}

It is worth again noting that our use of the \textsc{Avatar} framework results in a locally-checkable version of the \textsc{Chord} network.  \textsc{Chord} as defined on an arbitrary set of nodes is actually \emph{not} locally checkable, particularly because of the ``ring'' edges (in a legal configuration, exactly one node should have two immediate neighbors with smaller identifiers, but which node this should be cannot be determined if the node set is arbitrary).  Unlike prior approaches, then, our stabilizing \textsc{Chord} network is \emph{silent}, meaning no messages or ``probes'' need to be continuously exchanged between nodes in a legal configuration.

Our goal, then, is to use the $N$-node topology of \textsc{Cbt} to add edges to the guest nodes (and to the corresponding host network as required to maintain a dilation-1 embedding) until we have formed the correct $N$-node \textsc{Chord} network.

\subsection{Building \textsc{Chord} from \textsc{Cbt}}
Figure~\ref{algo:chord_scaffold} elaborates on the algorithm which uses our guest \textsc{Cbt} network as a scaffold for creating the guest \textsc{Chord} network.  The algorithm uses the fact that \textsc{Chord} edges can be created inductively.  That is, assuming all fingers from $0$ to $k$ are present, the $k+1$ finger can be created in a single round.  Specifically, if node $b$ is the $(i-1)$ finger of $c_0$, and $c_1$ is the $(i-1)$ finger of $b$, the $i$th finger of $c_0$ is $c_1$.  The algorithm begins by correctly building finger $0$, then recursively adds the first finger, then the second, and so on.  This adding of edges is done in a metered fashion, however, to prevent unnecessary degree growth from faulty initial configurations.

Once the scaffold network has been built, we can begin the process of constructing our final target topology.  We design our algorithm to execute on the $N$ guest nodes of \textsc{Cbt}, with the goal being to add edges to the nodes of \textsc{Cbt} until they have formed the $N$ guest node \textsc{Chord} network.  For now, we shall assume that the network is in the legal \textsc{Cbt} configuration.  We will relax this assumption and consider an arbitrary initial configuration shortly.

The algorithm begins with the root of $\textsc{Cbt}$ initiating a $\mathit{PIF}$ wave which connects each guest node with its $0$th finger.  Notice that, with the exception of one node, the edges in the host network realizing every guest node's $0$th finger are already present.  For any guest node $b \neq N-1$, the $0$th finger of $b$ is either (i) a guest node with the same host as $b$, or (ii) a guest node which is hosted by the successor of $\mathit{host}_b$.  Edges to guest nodes $0$ and $N-1$ are forwarded up the tree during the feedback wave, allowing the root of the tree to connect them at the completion of the wave, thus forming the base ring and completing every guest node's $0$th finger.  The root then executes $\log N-1$ additional $\mathit{PIF}$ waves, with wave $k$ correctly adding the $k$th finger for all guest nodes.  After $\mathcal{O}(\log^2 N)$ rounds, we have built the correct $\textsc{Avatar}(\textsc{Chord})$ network.

\begin{figure}
\begin{tabbing}
........\=....\=....\=....\=....\=....\=....\kill
\>\textit{// Execute when $\mathit{phase}_u = \mathit{CHORD}$; If $\mathit{phase}_u = \mathit{CBT}$, then execute} \\
\>\textit{// the original $\textsc{Avatar}(\textsc{Cbt})$ algorithm~\cite{berns_avatar-2016}}.\\
\>\textit{// As part of each round, nodes exchange their local state, including}\\
\>\textit{// $LastWave$, and check for faulty configurations as described in Section \ref{sect:phase}.}\\
1.\>Tree $T$ executes a $\mathit{PIF}(\mathit{MakeFinger}(0))$ wave:\\
2.\>\>\textbf{Propagate Action for $a$:} $\mathit{LastWave}_a = 0$\\
3.\>\>\textbf{Feedback Action for $a$:}\\
\>\>\>\textit{// Let $b$ be the $0$th finger of $a$.}\\
4.\>\>\>\textbf{if} $\mathit{LastWave}_a = \mathit{LastWave}_b = 0$ \textbf{then}\\
5.\>\>\>\>Create the edge $(a,b)$\\
6.\>\>\>\>Forward an edge to node $0$ \\
  \>\>\>\>\>\>or $N-1$ (if present) to parent\\
7.\>\>\>\textbf{else} $\mathit{phase}_u = \mathit{CBT}$ (where $u$ is $\mathit{host}_a$) \textbf{fi}\\
8.\>\textbf{for} $k = 1,2,\ldots,\log N - 1$ \textbf{do}\\
9.\>\>Tree $T$ executes a $\mathit{PIF}(\mathit{MakeFinger}(k))$ wave:\\
10.\>\>\>\textbf{Propagate Action for $a$:} $\mathit{LastWave}_a = k$\\
11.\>\>\>\textbf{Feedback Action for $a$:}\\
\>\>\>\>\textit{// Let $b_0, b_1$ be the $k-1$ fingers of $a$.}\\
12.\>\>\>\>\textbf{if} $\mathit{LastWave}_a = \mathit{LastWave}_{b_0} = \mathit{LastWave}_{b_1} = k$ \textbf{then}\\
13.\>\>\>\>\>Create edge $(b_0,b_1)$, the $k$th finger of $b_0$.\\
14.\>\>\>\>\textbf{else} $\mathit{phase}_u = \mathit{CBT}$ (where $u$ is $\mathit{host}_a$) \textbf{fi}\\
15.\>\textbf{od}
\end{tabbing}
\caption{Algorithm 1: PIF for \textsc{Chord} Target from \textsc{Cbt} Scaffold}
\label{algo:chord_scaffold}
\end{figure}

\subsection{Phase Selection}
\label{sect:phase}
The final piece for our self-stabilizing \textsc{Chord} network is to create a mechanism by which nodes can know which algorithm they should be executing: either executing the steps required to build the $\textsc{Avatar}(\textsc{Cbt})$ network, or the steps required to build the \textsc{Chord} target network from an existing \textsc{Cbt} network).  We assume each host node $u$ maintains a phase variable $\mathit{phase}_u$ whose value is from the set $\{CBT, CHORD, DONE\}$.  When $\mathit{phase}_u = \mathit{CBT}$, a node is executing the algorithm for the $\textsc{Avatar}(\textsc{Cbt})$ network.  If $\mathit{phase}_u = \mathit{CHORD}$, then the PIF waves in Algorithm \ref{algo:chord_scaffold} are executed.  If $\mathit{phase}_u = \mathit{DONE}$, then a node will take no actions provided its local neighborhood is consistent with a legal $\textsc{Avatar}(\textsc{Chord})$ network.

Determining which algorithm to execute requires a node be able to determine if the configuration they are in now is either completely correct or consistent with one reached by building \textsc{Chord} from \textsc{Cbt}.  We define a subset of states under which Algorithm \ref{algo:chord_scaffold} will converge, and then define a predicate which nodes can use to determine if the network is in one of these states.

\begin{definition}
A graph $G$ with node set $V$ is in a \emph{scaffolded \textsc{Chord} configuration} if $G$ is reachable by executing the PIF waves defined by Algorithm \ref{algo:chord_scaffold} on a correct $\textsc{Avatar}(\textsc{Cbt})$ network.
\end{definition}

Thanks to the predictability of the \textsc{Cbt} scaffold network, nodes can determine if their state is consistent with that of a scaffolded \textsc{Chord} configuration.  Informally, each guest node can determine this by simply checking to see if its neighborhood is a superset of $\textsc{Cbt}$ but a subset of $\textsc{Chord}$, with the first $k$ fingers from $\textsc{Chord}$ present, for some $k \in [0, \log N)$.  We define the predicate a node can use for this operation below.

\begin{definition}
Let $\mathit{scaffolded}_b$ be a predicate defined over the local state of a guest node $b$, as well as the state of nodes $b' \in N(b)$.  The value of $\mathit{scaffolded}_b$ is the conjunction of the following conditions.
\begin{enumerate}
\item Node $b$ has all neighbors from $\textsc{Cbt}$, each with the proper host and tree identifier (a value set as part of a legal \textsc{Cbt} scaffold network).
\item Node $b$ has last executed the $k$th feedback wave of a\\ $\mathit{PIF}(\mathit{MakeFinger}(k),\bot)$ wave, for some $0 \leq k < \log N$
\item All neighbors of $b$ have either all $k$ fingers present, or $k+1$ fingers (if a child has just processed a feedback wave), or $k-1$ (if parent has not yet processed the current feedback wave), where $k$ is the last feedback wave $b$ has executed
\item Node $b$'s parent has last executed the $k$th feedback wave, and has the first $k$ $\textsc{Chord}$ fingers, or $k-1$ fingers if $b$ has just completed the feedback transition and $b$'s parent has not.
\end{enumerate}
\end{definition}

In every round of computation, all nodes are checking their local state and the state of their neighbors to determine if a faulty configuration is found.  This check for faults, along with the $\mathit{scaffolded}_b$ predicate, is used to set the $\mathit{phase}_u$ variable as follows.  If a fault is detected and $\mathit{scaffolded}_b = \mathit{false}$, then $u = \mathit{host}_b$ sets $\mathit{phase}_u = \mathit{CBT}$.  Furthermore, if any neighbor $v$ has a different value for $\mathit{phase}_v$, then $\mathit{phase}_u = \mathit{CBT}$.  We will show in a moment that this procedure is sufficient to ensure the correct algorithm is executed within a short amount of time (i.e. if the configuration is not a scaffolded \textsc{Chord} configuration, then all nodes begin executing the $\textsc{Avatar}(\textsc{Cbt})$ algorithm quickly).  Notice that once the correct configuration is built, nodes can execute a final $\mathit{PIF}$ wave to set $\mathit{phase}_u = \mathit{DONE}$.  If any node detects \emph{any} fault during this process, it simply sets $\mathit{phase}_u = \mathit{CBT}$.  Since $\textsc{Avatar}(\textsc{Chord})$ is locally checkable, at least one node will not set $\mathit{phase}_u = \mathit{DONE}$ during the final $\mathit{PIF}$ wave, and the $\textsc{Avatar}(\textsc{Cbt})$ algorithm will begin.

\section{Analysis}
\label{section:analysis}
We sketch the proofs for our main results below.  The full proofs of convergence and degree expansion are contained in the appendix.

\begin{theorem}
Algorithm \ref{algo:chord_scaffold}, when combined with the self-stabilizing algorithm for $\textsc{Avatar}(\textsc{Cbt})$ from Berns~\cite{berns_avatar-2016}, is a self-stabilizing algorithm for the network $\textsc{Avatar}(\textsc{Chord})$ with convergence time $\mathcal{O}(\log^2 N)$ in expectation.
\end{theorem}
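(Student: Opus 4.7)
The plan is to compose two convergence arguments: the guarantee (already restated in Section~\ref{section:avatar}) that $\textsc{Avatar}(\textsc{Cbt})$ stabilizes in $\mathcal{O}(\log^2 N)$ expected rounds, and a deterministic $\mathcal{O}(\log^2 N)$-round bound for the finger construction once a \textsc{Cbt} scaffold exists. I would split the analysis by the starting configuration.

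First I would handle the case in which the initial configuration is already a scaffolded \textsc{Chord} configuration, that is, reachable from a legal \textsc{Cbt} by executing some prefix of the PIF waves in Algorithm~\ref{algo:chord_scaffold}. In this case the predicate $\mathit{scaffolded}_b$ holds at every guest node $b$, so no node demotes its phase to $\mathit{CBT}$, and the algorithm simply continues issuing the remaining $\mathit{MakeFinger}(k)$ waves. Because the scaffold tree has depth $\mathcal{O}(\log N)$, each PIF wave completes in $\mathcal{O}(\log N)$ rounds, and at most $\log N$ waves remain, giving a deterministic $\mathcal{O}(\log^2 N)$ bound on this branch.

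Next I would handle any other initial configuration. Here local checkability of $\textsc{Avatar}(\textsc{Chord})$ guarantees that at least one host node detects a fault, and since $\mathit{scaffolded}_b$ fails there, that node sets $\mathit{phase}_u = \mathit{CBT}$. The ``disagreeing neighbor forces \textsc{Cbt}'' rule then pushes the phase reset outward one hop per round, and any node still executing Algorithm~\ref{algo:chord_scaffold} demotes itself as soon as its own local checks (lines 4 and 12) fail. I would argue that within $\mathcal{O}(\log N)$ rounds (a single PIF wave's worth of time) every node is executing the $\textsc{Avatar}(\textsc{Cbt})$ algorithm from some connected state. The restated theorem of Berns then gives $\mathcal{O}(\log^2 N)$ expected rounds to reach a legal \textsc{Cbt} configuration, after which the first case applies for another $\mathcal{O}(\log^2 N)$ rounds. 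The closing $\mathit{PIF}$ wave promoting every phase to $\mathit{DONE}$ costs only an additional $\mathcal{O}(\log N)$ rounds; if a fault is detected during it some node demotes itself back to $\mathit{CBT}$, but local checkability ensures this happens only if the target has truly not been reached, so no spurious restarts can occur.

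The main obstacle will be rigorously formalizing the interaction between the two phases, and in particular arguing that a guest node partially through the \textsc{Chord} construction cannot persist in a state that masks a fault. The key technical lemma is that $\mathit{scaffolded}_b$, together with the phase-disagreement rule and the $\mathit{LastWave}$ consistency checks embedded in Algorithm~\ref{algo:chord_scaffold}, ensures that any non-scaffolded configuration triggers $\mathit{phase}_u = \mathit{CBT}$ at some node in $\mathcal{O}(1)$ rounds, and that this reset saturates any affected subtree within one additional PIF wave. Once that lemma is established, the two convergence times simply add and the theorem's $\mathcal{O}(\log^2 N)$ expected bound follows, with the randomness coming entirely from the underlying \textsc{Cbt} stabilization.
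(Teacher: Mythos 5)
Your decomposition is the same as the paper's: (i) show that from any configuration that is neither legal $\textsc{Avatar}(\textsc{Chord})$ nor a scaffolded \textsc{Chord} configuration, all nodes revert to $\mathit{phase} = \mathit{CBT}$ within $\mathcal{O}(\log N)$ rounds; (ii) invoke the restated $\mathcal{O}(\log^2 N)$ expected convergence of the \textsc{Cbt} algorithm; (iii) prove by induction on the finger index that $\log N - 1$ PIF waves of $\mathcal{O}(\log N)$ rounds each deterministically complete \textsc{Chord} from a correct scaffold. Parts (ii) and (iii) of your plan match the paper's appendix essentially line for line.

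The genuine gap is in step (i). You write that local checkability gives \emph{some} node that detects a fault and sets $\mathit{phase} = \mathit{CBT}$, and that the ``disagreeing neighbor'' rule then propagates the reset one hop per round, concluding that all nodes are in the $\mathit{CBT}$ phase within $\mathcal{O}(\log N)$ rounds. But one-hop-per-round propagation from a single detecting node only yields a bound of $\mathcal{O}(\mathrm{diam}(G_i))$, and an arbitrary weakly-connected initial configuration can have diameter $\Omega(n)$. What you actually need --- and what the paper's Lemma~\ref{lemma:detdiam_intermediate_chord} supplies --- is that after one round \emph{every} node $u$ has a fault-detecting node within distance $2(\log N + 1)$. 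The paper proves this by examining the ball of radius $2(\log N+1)$ around $u$: if no node in that ball detects a fault (extra neighbor, wrong host/tree identifier, inconsistent $\mathit{LastWave}$ or finger counts relative to the $\mathit{scaffolded}$ predicate), then the ball is locally consistent with a scaffolded configuration, and since a correct scaffolded configuration has diameter at most $2(\log N+1)$, the ball must contain the entire network and the configuration is in fact scaffolded or correct --- contradicting the hypothesis. Your ``key technical lemma'' asserts the conclusion (``saturates \ldots within one additional PIF wave'') but does not contain this diameter argument, and without it the $\mathcal{O}(\log N)$ phase-reset bound, and hence the additive composition giving $\mathcal{O}(\log^2 N)$ overall, does not follow.
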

\begin{proofsketch} 
To prove the convergence time of our algorithm, we first show that if the configuration is not a scaffolded \textsc{Chord} configuration, within $\mathcal{O}(\log N)$ rounds, all nodes are executing the algorithm to build the scaffold \textsc{Cbt} network.  We then show that nodes will have built the correct \textsc{Cbt} network within an additional $\mathcal{O}(\log^2 N)$ rounds in expectation, at which point all nodes begin building the target \textsc{Chord} network.  We will then show that this process succeeds in $\mathcal{O}(\log^2 N)$ rounds.  Putting these together, we get an overall convergence time of $\mathcal{O}(\log^2 N)$ in expectation.
\end{proofsketch}
\begin{theorem}
Algorithm \ref{algo:chord_scaffold}, when combined with the self-stabilizing algorithm for $\textsc{Avatar}(\textsc{Cbt})$ from Berns~\cite{berns_avatar-2016}, is a self-stabilizing algorithm for the network $\textsc{Avatar}(\textsc{Chord})$ with degree expansion of $\mathcal{O}(\log^2 N)$ in expectation.
\end{theorem}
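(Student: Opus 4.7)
The plan is to decompose the execution into the two phases suggested by the convergence analysis -- the \textsc{Cbt} scaffold construction phase and the \textsc{Chord} finger-addition phase -- and to bound the degree growth incurred in each separately before combining them. The essential observation is that the prior self-stabilizing algorithm for $\textsc{Avatar}(\textsc{Cbt})$ already gives an expected degree expansion of $\mathcal{O}(\log^2 N)$ during the scaffold phase, while the PIF-driven finger additions of Algorithm~\ref{algo:chord_scaffold} only introduce edges that belong to the final $\textsc{Avatar}(\textsc{Chord})$ topology, so the Chord-build phase should contribute only a constant multiplicative factor.

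First I would handle an arbitrary initial configuration by invoking the phase-selection mechanism of Section~\ref{sect:phase}. Whenever a host $u$ detects a local inconsistency with either a scaffolded \textsc{Chord} configuration or a legal \textsc{Chord} configuration, or whenever one of its neighbors is in a different phase, $u$ sets $\mathit{phase}_u = \mathit{CBT}$. Within $\mathcal{O}(\log N)$ rounds all nodes in the network share the \textsc{Cbt} phase, and because no edge-adding program action fires while the phase is still propagating, the host-node degree cannot grow during this brief settling stage. The \textsc{Cbt} algorithm then runs from whatever topology remains, and the restated Avatar(\textsc{Cbt}) theorem bounds the expected maximum host degree during the scaffold phase by $\mathcal{O}(\log^2 N)$ times the maximum degree present in the initial or final \textsc{Cbt} configuration; since the final $\textsc{Avatar}(\textsc{Chord})$ degree of any host dominates its final $\textsc{Avatar}(\textsc{Cbt})$ degree, this bound also yields an $\mathcal{O}(\log^2 N)$ expected expansion when measured against the final $\textsc{Avatar}(\textsc{Chord})$ configuration.

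Next I would analyze the finger-addition phase, starting from a legal $\textsc{Avatar}(\textsc{Cbt})$ scaffold. Each of the $\log N$ PIF waves of Algorithm~\ref{algo:chord_scaffold} adds exactly one new finger per guest node, and every edge created in wave $k$ is the $k$-th \textsc{Chord} finger of some guest and therefore belongs to the final $\textsc{Avatar}(\textsc{Chord})$ network. Consequently the Chord-related portion of any host's degree never exceeds its final degree in the target topology, and the only additional neighbors present during this phase are the remaining \textsc{Cbt} scaffold edges, which amount to at most a constant per guest node hosted. Since a host's final \textsc{Chord} degree dominates its \textsc{Cbt}-scaffold contribution, the finger-addition phase introduces only a constant-factor expansion on its own.

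The main obstacle I expect is carefully handling a ``failed'' \textsc{Chord}-construction attempt: an initial configuration may locally resemble a scaffolded \textsc{Chord} state while being globally inconsistent, in which case some hosts may execute one or more PIF waves and add spurious finger edges before the $\mathit{scaffolded}_b$ predicate detects the discrepancy and forces a reset to $\mathit{phase}_u = \mathit{CBT}$. Because each wave of Algorithm~\ref{algo:chord_scaffold} is gated by matching $\mathit{LastWave}$ values on both endpoints and the wave index advances by one per wave, at most one new finger per guest is added per wave, so the spurious finger growth incurred before the subsequent \textsc{Cbt} restart is bounded by $\mathcal{O}(\log N)$ per guest node, which is again dominated by the eventual final \textsc{Chord} degree. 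Combining the $\mathcal{O}(\log^2 N)$ expected expansion from the \textsc{Cbt} phase with the $\mathcal{O}(1)$ multiplicative contribution from the finger-addition phase yields the claimed $\mathcal{O}(\log^2 N)$ overall expected degree expansion.
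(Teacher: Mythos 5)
Your overall decomposition matches the paper's: the \textsc{Cbt} phase inherits the $\mathcal{O}(\log^2 N)$ expected expansion from the prior \textsc{Avatar}(\textsc{Cbt}) result, the finger-addition phase from a correct scaffold adds only edges of the final topology, and the remaining work is to bound what happens while nodes wrongly believe they are in the \textsc{Chord} phase. The gap is in that last step, and it is the step the paper's Lemma~\ref{lemma:degree_expansion} is entirely devoted to. First, your claim that ``no edge-adding program action fires while the phase is still propagating'' is false (and you contradict it yourself two paragraphs later): a host with $\mathit{phase}_u = \mathit{CHORD}$ in the arbitrary initial state will execute feedback actions and create edges before the reset to $\mathit{CBT}$ reaches it. Second, your bound on that spurious growth --- ``at most one new finger per guest is added per wave, so \ldots $\mathcal{O}(\log N)$ per guest node, dominated by the final \textsc{Chord} degree'' --- is not justified in an arbitrary faulty configuration. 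The feedback action of a node $a$ creates an edge incident on whatever nodes $a$'s corrupted state designates as its $(k-1)$-fingers; nothing prevents many distinct neighbors of a guest $b$ from each designating $b$ and each attaching one new edge to $b$ in the \emph{same} wave. So the growth of $b$'s degree in one wave is bounded by $\deg(b)$, not by $1$, and counted your way (per wave, over up to $\log N$ waves) you would only get a multiplicative $\mathcal{O}(\log N)$ blow-up relative to the initial degree, which is too weak to conclude $\mathcal{O}(\log^2 N)$ overall.

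The paper closes this by combining two facts you do not use together: (i) each \emph{neighbor} $b'$ of $b$ adds at most one edge to $b$ per PIF wave, i.e., per $2(\log N + 1)$ rounds, and (ii) by Lemma~\ref{lemma:time_to_cbt} every node has switched to $\mathit{phase} = \mathit{CBT}$ within $2(\log N + 1)$ rounds, so at most one wave's worth of feedback actions ever fires from a false \textsc{Chord} phase. Together these give at most one new edge per existing neighbor, i.e., the degree at most doubles --- a multiplicative factor of $2$ relative to the \emph{initial} configuration, which composes cleanly with the $\mathcal{O}(\log^2 N)$ expansion of the subsequent \textsc{Cbt} run. Your appeal to domination by the \emph{final} \textsc{Chord} degree is also the wrong comparison for an arbitrary initial configuration; the per-neighbor, single-wave accounting is what makes the lemma go through. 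The rest of your argument (scaffold phase and correct finger-addition phase) is consistent with the paper.
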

\begin{proofsketch} 
By design, any edge that is added to the network when building \textsc{Chord} from \textsc{Cbt} is an edge that will remain in the final correct configuration and therefore does not affect the degree expansion.  Furthermore, we know from the original \textsc{Avatar} paper that the expected degree expansion is $\mathcal{O}(\log^2 N)$ when all nodes are executing the \textsc{Cbt} algorithm.

The only new piece we need to consider, then, is to analyze the actions nodes might take when they incorrectly believe, based on their local state, that they are building the \textsc{Chord} network from the \textsc{Cbt} scaffold (a ``false \textsc{Chord}'' phase), which we show can only happen for $\mathcal{O}(\log N)$ rounds.  Since adding \textsc{Chord} edges is coordinated with a PIF wave, each guest node $b$ can only increase its degree by one during this time.  At most, then, a node may increase its degree by a factor of 2 during this time, leading to the initial degree growth of $2$ during the ``false \textsc{Chord}'' phase.
\end{proofsketch}
\section{Generalizing Our Approach}
\label{section:overview}
Above we have provided an algorithm for using one self-stabilizing overlay network to create another self-stabilizing overlay network.  While we are not the first to use this general idea in the construction of overlay networks, we are the first to explicitly define and discuss this approach, which we call \emph{network scaffolding}.  To use the network scaffolding approach, one must define several components.  In particular, we must define:
\begin{itemize}
    \item The \emph{scaffold network}, an intermediate topology which we can construct from any initial configuration.
    \item The \emph{target network}, the network topology that we wish to build for use with our final application.
    \item A self-stabilizing algorithm for constructing the scaffold network.
    \item An algorithm for building the target network when starting from the correct scaffold network.
    \item A local predicate allowing nodes to determine whether they should be building the scaffold network or the target network.
\end{itemize}

Our self-stabilizing algorithm from above used $\textsc{Avatar}(\textsc{Cbt})$ as the scaffold network to build a $\textsc{Avatar}(\textsc{Chord})$ target network.  To do was relatively straightforward: we defined a way to build \textsc{Chord} from \textsc{Cbt}, and then proved nodes would quickly determine which network they were building.

This network scaffolding approach has been used in some form by other previous work, and we hope it will be extended in future work as well.  Our approach heavily depends upon the scaffold network selected.  The \textsc{Cbt} network has many desirable properties for a scaffold network when compared to other examples in prior work.  These properties include:

    \noindent
    \textbf{Efficient self-stabilization:}  If the scaffold itself is inefficient to build, we cannot expect the target topology to be built efficiently.  TCF~\cite{berns_tcf_11} can be thought of as an inefficient scaffold network that requires $\mathcal{O}(n)$ space.  \textsc{Avatar}(\textsc{Cbt}) is a logical choice as, prior to this work, it is the only self-stabilizing overlay network we are aware of with efficient stabilization in terms of both time and space.
    
    \noindent
    \textbf{Low node degree:}  Unlike a real scaffold, we maintain the scaffold edges after the target network is built.  Therefore, the scaffold network must have low degree if we wish our final configuration to be so.  Again, the suitability of \textsc{Avatar}(\textsc{Cbt}) is apparent, as it requires only a few edges per virtual node (and a logarithmic number of edges per real node).
    
    \noindent
    \textbf{Low diameter:}  Low diameter allows (relatively) fast communication for adding the target network's edges one at a time.  A previous work, \textsc{Re-Chord}~\cite{kniesburges_rechord_11}, used a ``scaffold'' of the \textsc{Linear} network, whose $\mathcal{O}(n)$ diameter contributed to the $\mathcal{O}(n \log n)$ convergence time of their algorithm.
    
    \noindent
    \textbf{Predictable routing:}  The predictable routing, particularly for communication, allows us to add edges in a metered and checkable fashion.  This predictability helps with both design and analysis.  It would be interesting to see if a semi-structured overlay network could be used as a scaffold, as semi-structured overlays may be easier to build.  To date, little work has been done on self-stabilizing semi-structured overlay networks, but there are several examples of efficient creation of semi-structured networks in non-self-stabilizing settings~\cite{gotte_time-optimal_podc21} which may be interesting starting points for future work.
    
    \noindent
    \textbf{Local checkability:} To be able to determine which phase of the algorithm should be executed quickly, without ``wasting'' time and resources adding edges from a faulty configuration, the scaffold should ideally be locally checkable.  Some previous overlay networks have used a ``probing'' approach where messages were circulated continuously to try and detect faulty configurations.  The risk of this approach in network scaffolding is that nodes may spend too long adding edges from an incorrect scaffold, or take too long to detect a faulty configuration.
%
%
%
%
%
%
%
%
%
%
%
%
%
%
%
%
%
%
%
%
%
%
%
%
%
%
%
%
%
%
%
%
%
%
%
%
%
%
%
%
%
%
%
%
%
%
%
%
%
%
%
%
%
%
%
\section{Concluding Thoughts}
\label{section:concluding_thoughts}
In this paper, we have presented the first time- and space-efficient algorithm for building a \textsc{Chord} network using a technique we call \emph{network scaffolding}.  We discussed considerations for expanding this technique, in particular pointing out considerations and implications for various properties of the scaffold network.

An obvious extension to our work would be to consider building other target topologies using \textsc{Avatar}(\textsc{Cbt}) as a scaffold network.  For instance, networks with good load balancing properties or with high resilience to churn could be converted into self-stabilizing variants using \textsc{Avatar} to define the network topology and the \textsc{Cbt} scaffold to build this correct topology.  It would also be interesting to investigate the correctness and complexity of this approach when using a more realistic asynchronous communication model.


\bibliographystyle{splncs04}
\bibliography{overlays}
\appendix
\appendixpage
\let\cleardoublepage\clearpage
\section{Convergence Time Analysis}
First, consider how much time is required for the network to begin the algorithm for building the correct \textsc{Cbt} network.  We will show that, after $\mathcal{O}(\log N)$ rounds, if the configuration is neither the correct $\textsc{Avatar}(\textsc{Chord})$ network nor a scaffolded $\textsc{Chord}$ configuration, then all nodes are executing the $\textsc{Avatar}(\textsc{Cbt})$ algorithm.  During these $\mathcal{O}(\log N)$ rounds, the degree of a node can at most double from its initial degree.  Finally, after the correct $\textsc{Avatar}(\textsc{Cbt})$ network is built, the $\textsc{Avatar}(\textsc{Chord})$ network is built in an additional $\mathcal{O}(\log^2 N)$ rounds.  We prove each of these claims below.

\begin{lemma}
Let $G_i$ be a configuration with node set $V \subseteq [N]$.  Suppose $G_i$ is not a legal $\textsc{Avatar}(\textsc{Chord})$ configuration, and $G_i$ is also not a scaffolded $\textsc{Chord}$ configuration.  After 1 round, any node $u \in G_i$ is distance at most $2 \cdot (\log N + 1)$ from a node $v$ with $\mathit{phase}_v = \mathit{CBT}$.
\label{lemma:detdiam_intermediate_chord}
\end{lemma}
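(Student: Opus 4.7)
The plan is to prove the contrapositive: if, after one round of computation, some host $u$ has every node within distance $2(\log N + 1)$ satisfying $\mathit{phase} \neq \mathit{CBT}$, then the configuration $G_i$ must be either a legal $\textsc{Avatar}(\textsc{Chord})$ or a scaffolded $\textsc{Chord}$ configuration, contradicting the hypothesis.

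First, I would make precise the locality of the one-round phase update. Recall from Section~\ref{sect:phase} that host $u$ sets $\mathit{phase}_u = \mathit{CBT}$ in one round whenever (a) some guest $b$ hosted by $u$ has $\mathit{scaffolded}_b = \mathit{false}$, (b) $u$'s local check of its $\textsc{Cbt}$ neighborhood (host id, tree id, presence of required Cbt edges) fails, or (c) some neighbor $v$ carries a different value of $\mathit{phase}_v$. Consequently, if every host in the ball $B_r(u)$ of host-radius $r$ keeps $\mathit{phase} \neq \mathit{CBT}$ after one round, then throughout $B_r(u)$ all scaffolding predicates evaluate to true, all $\textsc{Cbt}$ neighborhood checks pass, and all phase variables agree.

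Second, I would use the geometry of the scaffold. A single $N$-guest $\textsc{Cbt}$ has diameter $2\log N$ measured in guest edges (leaf to root to leaf). Since every guest edge is realized by at most one host edge in the dilation-$1$ embedding, the entire $\textsc{Cbt}$ cluster containing the guests hosted at $u$ fits inside a host-ball of radius at most $2\log N$ around $u$. Choosing $r = 2(\log N+1) = 2\log N + 2$ therefore covers $u$'s cluster together with at least one additional boundary layer of hosts lying outside that cluster.

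Third, combining the two observations, I would argue that inside $B_r(u)$ we see a fully-formed $N$-guest $\textsc{Cbt}$ scaffold whose PIF-wave state $\mathit{LastWave}$ is consistent across levels (since $\mathit{scaffolded}_b$ restricts the admissible discrepancies between a guest, its children, and its parent to exactly the $k-1,k,k+1$ pattern created by a PIF wave in progress), and whose boundary layer contains no extraneous inter-cluster edges (any such edge would violate a boundary guest's Cbt neighborhood check). Because $G_i$ is weakly connected, this forces $u$'s cluster to be the whole network, which means $G_i$ is reachable by executing the Algorithm~\ref{algo:chord_scaffold} PIF waves starting from a correct $\textsc{Avatar}(\textsc{Cbt})$, i.e.\ $G_i$ is a scaffolded $\textsc{Chord}$ configuration (or is already legal $\textsc{Avatar}(\textsc{Chord})$ when the waves have completed). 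This contradicts the hypothesis and establishes the lemma.

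The main obstacle is the case analysis at the cluster boundary and at partially-propagated PIF waves: I must verify that \emph{every} form of deviation from scaffolded $\textsc{Chord}$, such as wrong host or tree ids, missing or extra $\textsc{Cbt}$ edges, extraneous inter-cluster edges, an incorrect number of fingers at some guest, inconsistent $\mathit{LastWave}$ values between a guest and its parent or children, or disagreement on $\mathit{phase}$, forces a failure of either the $\textsc{Cbt}$ neighborhood check or one of the four clauses of $\mathit{scaffolded}_b$ at some node within the radius-$2(\log N+1)$ ball. The subtle sub-case is when different subtrees have advanced different numbers of feedback waves; here I would rely on clauses (3) and (4) of the $\mathit{scaffolded}_b$ definition, which permit exactly a $\pm 1$ offset along parent-child links and nothing more, so any larger discrepancy within $u$'s tree is detected by a guest at distance at most the tree's depth from $u$, well inside the chosen ball.
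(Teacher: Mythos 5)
Your proposal is correct and follows essentially the same route as the paper's proof: both argue by contrapositive over the ball of radius $2(\log N + 1)$ around $u$, showing that if no node in that ball detects a fault (via the \textsc{Cbt} neighborhood check, the embedding check, or the $\mathit{scaffolded}_b$ predicate's $\pm 1$ PIF-wave tolerance), then the ball must contain the entire correctly-embedded $N$-guest scaffold and the configuration must be a scaffolded \textsc{Chord} or legal $\textsc{Avatar}(\textsc{Chord})$ configuration. Your explicit justification of the radius via the \textsc{Cbt} diameter plus a boundary layer is a useful detail the paper leaves implicit, but the argument is the same.
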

\begin{proof}
Imagine the subgraph induced by all nodes at distance at most $2 \cdot (\log N + 1)$ from any node $u$.  If this ball does not contain all the nodes in the network, then there exists at least one node $v$ in this ball such that $v$ has a neighbor that it would not have in the correct $\textsc{Avatar}(\textsc{Cbt})$ configuration, and $v$ detects this extra neighbor.  Therefore, $v$ would set $\mathit{phase}_{v} = CBT$ in a single round.  Furthermore, the host nodes in this ball must be hosting exactly $N$ guest nodes, with each guest node hosted by the correct host node with the correct tree identifier, else at least one node detects an incorrect embedding (by prior result on the local checkability of \textsc{Avatar}).  Finally, the $\mathit{PIF}$ state of these $N$ guest nodes must be consistent, else a faulty configuration that is not a scaffolded \textsc{Chord} configuration is detected, and $\mathit{phase}_u = \mathit{CBT}$.

Assume the $N$ guest nodes in this ball realize a $\textsc{Cbt}$ network with some additional edges, and the $\mathit{PIF}$ state is consistent.  If the network is a correct $\textsc{Avatar}(\textsc{Chord})$ configuration, no node detects a fault, as $\textsc{Avatar}(\textsc{Chord})$ is locally checkable.  A guest node can easily determine if it has the first $0 \leq k < \log N$ $\textsc{Chord}$ fingers by examining its set of neighbors.  Guest nodes can also verify that the last $\mathit{PIF}$ feedback wave processed was for this $k$ (potentially $k-1$ if finger $k$ is a member of $\textsc{Cbt}$).  If a $b$ node detects it has a different number of correct \textsc{Chord} fingers as a neighbor, it must either be currently processing a $\mathit{PIF}$ wave and about to receive the feedback wave (if the children have an extra finger) or pass the feedback wave to its parent (if its parent has one less correct \textsc{Chord} finger).  If neither of these conditions are true, then $b$ sets $\mathit{scaffolded}_b = 0$.  Therefore, if all nodes do not have their first $k$ \textsc{Chord} fingers correct, or all nodes are not in a $\mathit{PIF}$ wave to add their $k$th finger to the already-correct set of $k-1$ fingers, at least one node must set $\mathit{scaffolded}_b = 0$, which results in $\mathit{phase}_u = \mathit{CBT}$ (where $u = \mathit{host}_b$).
\end{proof}

\begin{lemma}
Suppose configuration $G_i$ is not the $\textsc{Avatar}(\textsc{Chord})$ configuration, and $G_i$ is not a scaffolded \textsc{Chord} configuration.  In at most $2 \cdot (\log N + 1)$ rounds, all nodes are executing the self-stabilizing $\textsc{Avatar}(\textsc{Cbt})$ algorithm.
\label{lemma:time_to_cbt}
\end{lemma}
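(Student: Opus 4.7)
The plan is to obtain Lemma~\ref{lemma:time_to_cbt} as an almost immediate corollary of Lemma~\ref{lemma:detdiam_intermediate_chord} together with the ``contagious'' propagation rule for $\mathit{phase}$. Recall the rule stated in Section~\ref{sect:phase}: whenever a host node $u$ has any neighbor $v$ with $\mathit{phase}_v \neq \mathit{phase}_u$, node $u$ sets $\mathit{phase}_u = \mathit{CBT}$. So once a single node adopts the $\mathit{CBT}$ phase, the $\mathit{CBT}$ value spreads outward at a rate of one hop per round.

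First I would invoke Lemma~\ref{lemma:detdiam_intermediate_chord} to conclude that after round~1, for every node $u \in V$ there exists some node $v$ at hop-distance at most $2(\log N + 1)$ from $u$ with $\mathit{phase}_v = \mathit{CBT}$. I would then proceed by induction on $t$, proving that after $1 + t$ rounds every node $u$ that is within distance $2(\log N + 1) - t$ of some $\mathit{CBT}$-phase node $v$ (under the configuration at round~1) is itself in phase $\mathit{CBT}$ (or in a state that forces it to become $\mathit{CBT}$ in the next round). The inductive step uses the propagation rule: if $u$ has a neighbor already in phase $\mathit{CBT}$, then at the end of the round $\mathit{phase}_u = \mathit{CBT}$; conversely, if $u$ is itself not yet in $\mathit{CBT}$, then the mismatch between $u$ and its neighbor triggers the assignment. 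Setting $t = 2(\log N + 1) - 1$ and adding the one initial round gives the claimed bound of $2(\log N + 1)$ rounds.

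The main subtlety I would need to handle carefully is the possibility that edges are being added or deleted during these same rounds by nodes that still believe themselves to be in $\mathit{CHORD}$ or $\mathit{DONE}$ phase, which in principle could shift which node is ``closest'' to a $\mathit{CBT}$-phase node. The key observation here is that the propagation argument only relies on the existence of \emph{some} neighbor in a differing phase, and new edges added by Algorithm~\ref{algo:chord_scaffold} do not delete any existing edges, so the hop-distance from any node to the nearest $\mathit{CBT}$-phase node in the round-1 configuration can only stay the same or decrease as rounds progress. Thus the induction goes through along the original round-1 distances and does not need to be re-anchored.

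Finally, once the wavefront has covered all of $V$, every node has $\mathit{phase}_u = \mathit{CBT}$ and therefore, by definition, is executing the self-stabilizing $\textsc{Avatar}(\textsc{Cbt})$ algorithm of Berns~\cite{berns_avatar-2016}. I expect the writeup itself to be short; the only piece requiring care is a clean statement of the inductive invariant so that the interaction with ongoing $\textsc{Chord}$-phase PIF waves is cleanly handled.
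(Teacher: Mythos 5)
Your proposal is correct and follows essentially the same route as the paper's own proof: invoke Lemma~\ref{lemma:detdiam_intermediate_chord} to place a $\mathit{CBT}$-phase node within distance $2(\log N+1)$ of every node, then let the phase-mismatch rule propagate $\mathit{CBT}$ one hop per round until all nodes are covered. Your additional care about the explicit inductive invariant and about concurrently added edges only shortening (never lengthening) the relevant distances is a welcome elaboration of details the paper leaves implicit, but it is not a different argument.
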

\begin{proof}
By Lemma \ref{lemma:detdiam_intermediate_chord}, if $G$ is not $\textsc{Avatar}(\textsc{Chord})$ nor a scaffolded $\textsc{Chord}$ network, then every node $u$ has a node $v$ within distance at most $2 \cdot (\log N + 1)$ with $\mathit{phase}_v = \mathit{CBT}$.  In every round, neighbors of nodes with $\mathit{phase}_v = \mathit{CBT}$ set their own $\mathit{phase}_u = \mathit{CBT}$.  In at most $2 \cdot (\log N + 1)$ rounds, every node $u$ has $\mathit{phase}_u = \mathit{CBT}$, thus every node is executing the $\textsc{Avatar}(\textsc{Cbt})$ algorithm.
\end{proof}

At this point, we know that either the network is the correct target network or all nodes are executing the algorithm for building the \textsc{Cbt} guest network.  We restate the convergence time for this algorithm below, as the full proof is contained in the original \textsc{Avatar} work~\cite{berns_avatar-2016}.

\begin{theorem}
\label{thm:cbt_convergence}
From any configuration where all nodes are executing the self-stabilizing algorithm for \textsc{Cbt} (that is, all nodes have $\mathit{phase}_u = \mathit{CBT}$), in $\mathcal{O}(\log^2 N)$ rounds in expectation, the correct $\textsc{Avatar}(\textsc{Cbt}(N))$ is built.
\end{theorem}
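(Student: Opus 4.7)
\begin{proofsketch}
Since this result is restated from the original \textsc{Avatar}(\textsc{Cbt}) paper~\cite{berns_avatar-2016}, my plan is to reproduce its three-phase analysis---clustering, matching, and merging---at enough detail to justify the $\mathcal{O}(\log^2 N)$ bound, deferring the intricate probabilistic details to the original work.

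First I would handle the clustering phase. Because $\textsc{Avatar}(\textsc{Cbt})$ is locally checkable and every node starts with $\mathit{phase}_u = \mathit{CBT}$, any node whose local neighborhood is inconsistent with a legal \textsc{Cbt} cluster detects this inconsistency in $\mathcal{O}(1)$ rounds and resets itself into a single-host cluster hosting the full $N$-guest-node \textsc{Cbt}. Propagating this detection outward along the existing edges of the (weakly-connected) host network, within $\mathcal{O}(\log N)$ rounds every node is a member of some legal \textsc{Cbt} cluster (possibly a singleton), so the remainder of the argument can focus on reducing the number of clusters from at most $N$ down to one.

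Next I would analyze a single matching/merging round. The root of each cluster initiates a PIF wave through its binary-tree guest graph in which each guest node independently flips a coin to choose the leader or follower role; this wave completes in $\mathcal{O}(\log N)$ rounds because the tree has depth $\log N$. Using the connectivity of the host network and the independence of the coin flips, one shows that any given cluster with a neighbor has constant probability of being paired with some neighboring cluster of opposite role during the wave, and a subsequent $\mathcal{O}(\log N)$-round merge---which proceeds level by level from the roots down to the leaves, updating responsible ranges and forwarding guest nodes to the partner host---combines the two matched clusters into a single legal \textsc{Cbt} cluster. Hence, in each $\mathcal{O}(\log N)$-round window, a constant fraction of the surviving clusters is eliminated in expectation.

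Finally, since the cluster count starts at $\mathcal{O}(N)$ and is divided by a constant factor every $\mathcal{O}(\log N)$ rounds in expectation, $\mathcal{O}(\log N)$ such windows suffice to reach a single cluster, giving total expected convergence time $\mathcal{O}(\log N) \cdot \mathcal{O}(\log N) = \mathcal{O}(\log^2 N)$. The main obstacle to fill in is the constant-probability matching claim: one must show that two adjacent clusters---potentially with unsynchronized PIF waves and several contenders competing for the same partner---are paired up with probability bounded below by a constant, and it is precisely this delicate probabilistic argument that is carried out in full in~\cite{berns_avatar-2016}.
\end{proofsketch}
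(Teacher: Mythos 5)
Your sketch is essentially the paper's own treatment of this result: the paper does not prove Theorem~\ref{thm:cbt_convergence} at all, but restates it from~\cite{berns_avatar-2016} and, in Section~\ref{section:avatar}, gives exactly the same intuition you reproduce (constant probability of a match/merge per $\mathcal{O}(\log N)$-round window, so a constant fraction of clusters is eliminated per window, and $\mathcal{O}(\log N)$ windows suffice). Your added detail on clustering, matching, and merging is consistent with the paper's informal three-component summary, and like the paper you correctly defer the delicate constant-probability matching argument to the original \textsc{Avatar} work.
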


By Lemma \ref{lemma:time_to_cbt} and Theorem \ref{thm:cbt_convergence}, after an expected $\mathcal{O}(\log^2 N)$ rounds, the scaffold network has been built.  Our last step is simply to analyze the time required for the \textsc{Chord} target network to be built using the \textsc{Cbt} scaffold, which we do next.

\begin{lemma}
Let $G_0$ be the correct $\textsc{Avatar}(\textsc{Cbt})$ configuration, and let each node $u \in V$ have $\mathit{phase}_u = \mathit{CHORD}$.  In $\mathcal{O}(\log^2 N)$ rounds, the configuration converges to $\textsc{Avatar}(\textsc{Chord})$.
\end{lemma}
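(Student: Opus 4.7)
The plan is to establish two things: each individual $\mathit{PIF}$ wave completes in $\mathcal{O}(\log N)$ rounds, and the $k$-th wave correctly installs every guest node's $k$-th \textsc{Chord} finger. Since Algorithm~\ref{algo:chord_scaffold} issues $\log N$ waves in sequence, multiplying gives the claimed $\mathcal{O}(\log^2 N)$ bound.

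First I would observe that any single $\mathit{PIF}(\mathit{MakeFinger}(k))$ wave takes $\mathcal{O}(\log N)$ rounds. The guest \textsc{Cbt} has height $\mathcal{O}(\log N)$, and the dilation-1 embedding of \textsc{Avatar} ensures that consecutive levels of the guest tree are connected either by a host edge or by co-location on the same host. Hence the propagation phase descends to the leaves in $\mathcal{O}(\log N)$ rounds, and the feedback phase ascends back to the root in another $\mathcal{O}(\log N)$ rounds.

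Next I would prove by induction on $k$ that after wave $k$ completes, every guest node $a$ has direct edges to each of its fingers $0, 1, \ldots, k$. For the base case $k = 0$, each guest node $b \ne N-1$ has $0$-th finger $b+1$, which is either co-hosted with $b$ or hosted by the successor of $\mathit{host}_b$, so the required edge is already available in the \textsc{Cbt} scaffold and the feedback action simply registers it. The wrap-around pair $(0, N-1)$ is handled by forwarding references upward during the feedback wave (line 6); along this path each parent becomes an acquaintance of the forwarded endpoint, so by the time the root holds both references it can directly introduce $0$ and $N-1$, completing the ring. For the inductive step, assume all $(k-1)$-th fingers are present. During wave $k$ each guest node $a$ has the two $(k-1)$-th finger neighbors $b_0 \equiv a - 2^{k-1}$ and $b_1 \equiv a + 2^{k-1} \pmod{N}$; by the inductive hypothesis both are neighbors of $a$, so $a$'s feedback action may legally introduce them, and since $b_1 - b_0 \equiv 2^k \pmod{N}$ the resulting edge is exactly the $k$-th finger of $b_0$.

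The main obstacle is verifying that every required $k$-th finger is in fact installed by some node's feedback, i.e., that the $N$ feedback actions collectively cover the entire finger set. Because the guest identifier space is $[N]$ and each $a$ participates in the wave, for every source $b_0$ there is a unique $a \equiv b_0 + 2^{k-1} \pmod{N}$ whose feedback introduces $b_0$ with its $k$-th finger; coverage is therefore complete. A secondary technical point is the $\mathit{LastWave}$ guard in lines 4 and 12, which fires only when $a$ and its relevant finger-neighbors all carry the same wave index. Starting from a correct \textsc{Cbt} with $\mathit{phase}_u = \mathit{CHORD}$ uniformly, the propagation phase updates $\mathit{LastWave}$ in lock step across the tree, so the guards always pass and no spurious $\mathit{phase}_u = \mathit{CBT}$ reset occurs. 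Summing the $\mathcal{O}(\log N)$ cost of each of the $\log N$ waves then yields the claimed $\mathcal{O}(\log^2 N)$ convergence time.
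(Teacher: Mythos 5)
Your proposal is correct and follows essentially the same route as the paper's proof: induction on the finger index, with each $\mathit{PIF}(\mathit{MakeFinger}(k))$ wave costing $\mathcal{O}(\log N)$ rounds over the height-$\mathcal{O}(\log N)$ \textsc{Cbt}, the $0$th finger realized by already-present host/successor edges plus root-forwarding of the $(0,N-1)$ wrap-around, and the $k$th finger obtained by having each node introduce its two $(k-1)$-finger neighbors. Your explicit coverage argument and the check on the $\mathit{LastWave}$ guards are details the paper leaves implicit, but they do not change the structure of the argument.
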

\begin{proof}
We prove this by induction on the number of correct fingers.  In $G_0$, the root of the guest network $\textsc{Cbt}$ will begin the first $\mathit{PIF}$ wave to add the $0$th fingers.  For every guest node $b \neq N-1$, the $0$th finger is either (i) a guest node hosted by $u = \mathit{host}_b$, or (ii) a guest node hosted by $v$ such that $v = \mathit{succ}_u$.  In either case, the edge in the host network realizing this guest edge already exists, and creating this $0$th finger is simply a matter of a host updating a guest node's local state in response to the $\mathit{PIF}$ wave.  For node $N-1$, the first $\mathit{PIF}$ wave forwards to the root node an edge to guest nodes $0$ and $N-1$, and the root will connect these two nodes.  This requires $2 \cdot (\log N + 1)$ rounds.  At this point, all nodes have the correct $0$th finger.

Assume finger $i$ has been created by the $\mathit{PIF}(\mathit{Chord}(i),\bot)$ wave.  The root will then execute the $\mathit{PIF}(\mathit{Chord}(i+1),\bot)$ wave.  Let guest node $b$ receive the feedback wave.  By the inductive hypothesis, node $b$ has links to guest nodes $c_0$ and $c_1$, where $b$ is finger $i$ of $c_0$, and $c_1$ is finger $i$ of $b$.  Node $b$ connects $c_0$ and $c_1$ in one round, thereby creating finger ($i+1$) for $c_0$.  After $2 \cdot (\log N + 1)$ rounds, all nodes have added their $i+1$ finger.  As there are $\log N - 1$ fingers, the total convergence time to reach the $\textsc{Avatar}(\textsc{Chord})$ network from $G_0$ is $\mathcal{O}(\log^2 N)$.
\end{proof}

Using Theorem \ref{thm:cbt_convergence} and the above lemmas gives us the following result.

\begin{theorem}
Algorithm \ref{algo:chord_scaffold} combined with the self-stabilizing algorithm for $\textsc{Avatar}(\textsc{Cbt})$ from the original work~\cite{berns_avatar-2016} is a self-stabilizing algorithm for the $\textsc{Avatar}(\textsc{Chord})$ network with convergence time $\mathcal{O}(\log^2 N)$ in expectation.
\end{theorem}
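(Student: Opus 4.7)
The plan is to combine the three building blocks developed in the appendix into a single end-to-end convergence argument, doing a simple case split on the initial configuration. Let $G_0$ be the (arbitrary) initial configuration on a connected node set $V \subseteq [N]$. Case~A: $G_0$ is already a legal $\textsc{Avatar}(\textsc{Chord})$ configuration; here local checkability of $\textsc{Avatar}(\textsc{Chord})$ implies no node ever detects a fault, and the algorithm is silent, so we are done in $0$ rounds. Case~B: $G_0$ is a scaffolded $\textsc{Chord}$ configuration; by the definition of this class the remaining $\mathit{PIF}(\mathit{MakeFinger}(k))$ waves simply continue, and the induction used in the final appendix lemma shows each of the at most $\log N - 1$ outstanding waves completes in $\mathcal{O}(\log N)$ rounds, giving $\mathcal{O}(\log^2 N)$ rounds total. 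Case~C: $G_0$ is neither; this is the main case and uses all three prior results.

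For Case~C, I would chain the results in order. First I invoke Lemma~\ref{lemma:time_to_cbt} to conclude that within $2(\log N + 1) = \mathcal{O}(\log N)$ rounds every host node $u$ has $\mathit{phase}_u = \mathit{CBT}$, so the network is globally executing the $\textsc{Avatar}(\textsc{Cbt})$ algorithm. Next I apply Theorem~\ref{thm:cbt_convergence} to this starting point to get that after an additional $\mathcal{O}(\log^2 N)$ rounds in expectation the correct $\textsc{Avatar}(\textsc{Cbt}(N))$ has been built. At that moment a final $\mathit{PIF}$ wave as described in Section~\ref{sect:phase} flips every $\mathit{phase}_u$ from $\mathit{CBT}$ to $\mathit{CHORD}$ in $\mathcal{O}(\log N)$ rounds, putting us in the hypothesis of the last appendix lemma (correct $\textsc{Cbt}$ configuration, every node in $\mathit{CHORD}$ phase), which gives convergence to $\textsc{Avatar}(\textsc{Chord})$ in another $\mathcal{O}(\log^2 N)$ rounds. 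Summing, $\mathcal{O}(\log N) + \mathcal{O}(\log^2 N) + \mathcal{O}(\log N) + \mathcal{O}(\log^2 N) = \mathcal{O}(\log^2 N)$ in expectation.

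The main obstacle I expect is not the arithmetic but justifying the clean handoff between phases, in particular ruling out a pathological oscillation in which some node's $\mathit{scaffolded}_b$ predicate toggles during convergence and repeatedly resets neighbors back to $\mathit{CBT}$. I would handle this by noting two monotonicity facts: (i) once a node sets $\mathit{phase}_u = \mathit{CBT}$ in Case~C, the surrounding $\textsc{Avatar}(\textsc{Cbt})$ stabilization, being a closed system, cannot be derailed by leftover $\textsc{Chord}$ fingers (those are simply ``extra'' edges that the $\textsc{Avatar}(\textsc{Cbt})$ algorithm prunes while still satisfying its invariants and degree bound); and (ii) once the correct $\textsc{Cbt}$ has been built and the $\mathit{CHORD}$-switching $\mathit{PIF}$ wave completes, the scaffolded $\textsc{Chord}$ predicate of every guest node remains true throughout the subsequent finger-building waves, so no node reverts to $\mathit{CBT}$. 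Together these ensure that each phase runs to completion only once, which is what lets us simply add the three bounds. The remaining observation is that the additive $\mathcal{O}(\log N)$ overheads for phase propagation and detection are dominated by the $\mathcal{O}(\log^2 N)$ stabilization terms, so the overall expected convergence time is $\mathcal{O}(\log^2 N)$, as claimed.
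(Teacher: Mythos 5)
Your proposal is correct and follows essentially the same route as the paper: the paper proves this theorem by simply chaining Lemma~\ref{lemma:time_to_cbt} ($\mathcal{O}(\log N)$ rounds until all nodes have $\mathit{phase}_u = \mathit{CBT}$), Theorem~\ref{thm:cbt_convergence} ($\mathcal{O}(\log^2 N)$ expected rounds to build the scaffold), and the final appendix lemma ($\mathcal{O}(\log^2 N)$ rounds to build \textsc{Chord} from the scaffold), exactly as in your Case~C. Your explicit Cases~A and~B and your monotonicity discussion of the phase handoff address details the paper leaves implicit, but they do not change the underlying argument.
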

\section{Degree Expansion Analysis}
Besides converging to a correct configuration quickly, the algorithm maintains low degree expansion.  By design, any edge that is added to the network when building \textsc{Chord} from \textsc{Cbt} is an edge that will remain in the final correct configuration and therefore does not affect the degree expansion.  Furthermore, we know from the original \textsc{Avatar} paper that the degree expansion is polylogarithmic when all nodes are executing the \textsc{Cbt} algorithm.  We restate this theorem from the original work below.

\begin{theorem}
\label{thm:cbt_degree}
When starting from an arbitrary configuration where all nodes are executing the \textsc{Cbt} algorithm (i.e. $\mathit{phase}_u = \mathit{CBT}$ for all $u \in V$), the degree expansion is $\mathcal{O}(\log^2 N)$ in expectation.
\end{theorem}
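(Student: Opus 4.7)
The plan is to decompose the degree growth of a host node into two separately analyzable contributions: growth from the matching subroutine and growth from the merging subroutine, noting that the initial clustering step introduces no edges beyond those already present in the arbitrary starting configuration. Because degree expansion is a ratio against the maximum degree in either the initial or final configuration (both of which are at least constant), it suffices to bound the absolute degree a host can acquire in expectation.

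First, I would analyze the matching phase. Here the leader/follower protocol is driven by a PIF wave from each cluster root, and by construction only a constant number of new partner edges per wave land at any given host (the introductions are propagated across the tree rather than concentrated at one vertex). The convergence-time analysis already cited from the \textsc{Avatar} work tells us that the cluster count shrinks by a constant factor every $\mathcal{O}(\log N)$ rounds and that only $\mathcal{O}(\log N)$ successful match-merge rounds occur before a single cluster remains. Since each host participates in at most $\mathcal{O}(\log N)$ such rounds in expectation, the matching phase adds at most $\mathcal{O}(\log N)$ to a host's degree.

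Second, I would bound the incremental growth from merging. When a host participates in a merge, it may acquire guest nodes that belonged to its partner, and with each inherited guest node it also inherits that guest's $\textsc{Cbt}$ neighbors in the guest graph. A worst-case count of these inherited neighbors, summed across the depth of the tree, is $\mathcal{O}(\log N)$ per merge. Over the $\mathcal{O}(\log N)$ merges a host can participate in, this sums to $\mathcal{O}(\log^2 N)$. Combining this with the matching contribution yields an overall $\mathcal{O}(\log^2 N)$ bound on degree growth, and therefore on the degree expansion, in expectation.

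The main obstacle is the randomized, potentially adversarial nature of cluster pairing: nothing a priori prevents a single unlucky host from being involved in matches or merges far more often than the average cluster. Closing this gap rigorously requires either a concentration argument on the number of matches incident to any host, or an amortized charging scheme that spreads the merge-induced degree growth evenly over the participating clusters. In both approaches the essential probabilistic ingredient is the random leader/follower assignment in the matching step, and the key lemma to nail down is the $\mathcal{O}(\log N)$ in-expectation bound on how many match-merge episodes any individual host joins before convergence.
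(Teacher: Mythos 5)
There is an important mismatch of expectations here: the paper does not prove this theorem at all. Theorem~\ref{thm:cbt_degree} is explicitly \emph{restated} from the original \textsc{Avatar} work of Berns, and the only justification offered in this paper is the short ``Intuition'' paragraph in Section~\ref{section:avatar}, which says that a node's degree grows to at most $\mathcal{O}(\log^2 N)$ during a merge, increases by only a constant during each match, and that there are $\mathcal{O}(\log N)$ matches in expectation. Your decomposition into matching and merging contributions mirrors that intuition almost exactly, so in that sense you are on the same track as the source the paper leans on. Two remarks. First, your merge accounting differs slightly from the paper's: you charge $\mathcal{O}(\log N)$ of inherited guest-neighbor edges per merge and sum over $\mathcal{O}(\log N)$ merges, whereas the paper's sketch asserts a single $\mathcal{O}(\log^2 N)$ bound on the degree attained \emph{during} any one merge; both yield the stated bound, but they are not the same claim and only the cited work can adjudicate which accounting is the correct one for the actual merge procedure. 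Second, the obstacle you flag --- that an unlucky host might be matched or merged far more often than the expected $\mathcal{O}(\log N)$ times, so that a concentration or amortized-charging argument is needed --- is a genuine gap in your write-up as a standalone proof, and it is precisely the probabilistic content that this paper delegates wholesale to the citation. Within the confines of this paper you cannot close that gap, because the matching protocol's randomization is never specified in enough detail here to run the argument; the honest options are either to cite the original \textsc{Avatar} analysis for the key lemma (as the paper does) or to reproduce that analysis, which would require material outside this manuscript.
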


The only new piece we need to provide here is to analyze the actions nodes might take when they incorrectly believe based on their local state that they are building the \textsc{Chord} network from the \textsc{Cbt} scaffold.  Earlier we had mentioned that there is a benefit from executing our algorithm on the larger $N$ node guest network, even though the convergence time might be slower.  It is here that we see this benefit: because adding an edge requires we coordinate with all $N$ guest nodes over $\mathcal{O}(\log N)$ rounds, if the network is not in a scaffolded \textsc{Chord} configuration, all nodes will detect this before too many unnecessary edges are added to the network.  We sketch the proof for this below.

\begin{lemma}
Suppose $G_i$ is neither a correct $\textsc{Avatar}(\textsc{Chord})$ configuration nor a scaffolded \textsc{Chord} configuration.  The degree of any node $u \in V$ increases by at most a factor of $2$ before $u$ begins executing the $\textsc{Avatar}(\textsc{Cbt})$ algorithm.
\label{lemma:degree_expansion}
\end{lemma}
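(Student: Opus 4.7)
The plan is to prove the lemma in two main parts: bound the duration of the ``false \textsc{Chord}'' window, bound the new edges that can be added within it, and finally lift the bound from guest nodes to the host node $u$.

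First, I would directly invoke Lemma~\ref{lemma:time_to_cbt}. Since $G_i$ is neither a legal $\textsc{Avatar}(\textsc{Chord})$ configuration nor a scaffolded \textsc{Chord} configuration, within at most $2(\log N + 1)$ rounds every node---and in particular $u$---sets $\mathit{phase} = \mathit{CBT}$. Hence the window in which $u$ might add edges under the \textsc{Chord}-building algorithm has length at most $2(\log N + 1)$.

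Second, I would argue that each guest node $b \in B_u$ can gain at most one new edge in this window. Under Algorithm~\ref{algo:chord_scaffold}, edges are created only by the feedback actions at lines~5 and~13. The PIF waves are launched sequentially by the root of the \textsc{Cbt} tree, and each wave must propagate down and feed back over a tree of depth $\Theta(\log N)$; therefore at most one feedback event can reach any particular guest node within a $2(\log N + 1)$-round interval. Each feedback action creates at most one new edge incident on that guest, so $b$'s guest-level degree grows by at most one during the window.

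Finally, I would translate this to host degree. For $b$ to have executed a feedback and created a new edge, $\mathit{scaffolded}_b$ must hold, which forces $b$'s \textsc{Cbt} neighbors to already be present with their correct host and tree identifiers. Hence each such $b$ has at least one preexisting incident edge, so adding at most one more at most doubles $b$'s guest degree. To lift this to $u$'s host degree, I would note that every new host-neighbor of $u$ must arise from a new cross-host finger edge added by some $b \in B_u$, and charge each such new cross-host finger to an already-present cross-host \textsc{Cbt} edge of the same guest, yielding $\deg(u) \le 2 \deg_0(u)$ at the host level as well. The subtlest point, where I expect the proof to take the most care, is this last charging step: because multiple guest edges can collapse to a single host neighbor, a naive sum-of-guest-degrees bound is loose, so a per-guest pairing against the scaffold (rather than aggregate counting) is what gives the sharp factor-of-two.
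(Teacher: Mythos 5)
Your overall route matches the paper's: invoke Lemma~\ref{lemma:time_to_cbt} to bound the ``false \textsc{Chord}'' window by $2(\log N + 1)$ rounds, bound the edges that can be created in one PIF wave, and conclude a factor-of-$2$ expansion. However, your central counting step is wrong in a way that matters. You claim that because at most one feedback event can reach a given guest node $b$ in the window, $b$'s degree grows by at most one (additively). But under Algorithm~\ref{algo:chord_scaffold} the degree of $b$ does not grow when $b$ executes its own feedback action: line~13 has the node $a$ executing the feedback create the edge $(b_0,b_1)$ between two of \emph{its} neighbors. So $b$'s degree grows when a \emph{neighbor} of $b$ executes a feedback action that introduces $b$ to a new node, and in an arbitrary faulty configuration $b$ may have arbitrarily many neighbors, each of which can execute one such feedback during the window. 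Hence $b$ can gain up to one new edge \emph{per existing neighbor} --- up to $\deg(b)$ new edges, not one. This is exactly the paper's accounting: each guest neighbor $b'$ of $b$ adds at most one node to $b$'s neighborhood per $2(\log N+1)$ rounds (after which $b'$ must wait for another wave that, by Lemma~\ref{lemma:time_to_cbt}, never completes before all phases flip to \textsc{Cbt}), and receiving one new edge per existing neighbor is what yields the factor of $2$. Your ``$+1$'' claim, if true, would prove something far stronger than the lemma; as stated it is unjustified, and the factor-of-$2$ conclusion you draw from it is reached only by coincidence of constants.

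The fix is small: replace the ``one feedback reaches $b$'' argument with a per-neighbor charge --- each new edge incident on $b$ is created by a distinct preexisting neighbor $b'$ of $b$, so charge the new edge to the existing edge $(b,b')$; this injective charging immediately gives $\deg(b) \le 2\deg_0(b)$ at the guest level, and the same charge descends to host edges since each new host edge of $u$ arises from some guest $b$ hosted by $u$ and the corresponding preexisting edge $(b,b')$ already contributes to $u$'s initial host degree. Your instinct that the host-level lifting needs a pairing rather than an aggregate sum is reasonable (and is a point the paper itself treats only loosely), but the pairing must be against the edge to the neighbor that \emph{performed} the introduction, not against ``an already-present \textsc{Cbt} edge of the same guest,'' since in a faulty configuration $b$'s neighbors need not be its \textsc{Cbt} neighbors at all.
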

\begin{proof}
Consider a guest node $b$ hosted by $u$ in configuration $G_i$.  Notice that each guest neighbor $b'$ of $b$ can add at most one node to the neighborhood of $b$ per $2 \cdot (\log N + 1)$ rounds when executing Algorithm \ref{algo:chord_scaffold}, as $b'$ attempts to add the $k$th finger to $b$ (at which point $b'$ must wait for another $\mathit{PIF}$ wave).  Furthermore, by Lemma \ref{lemma:time_to_cbt}, after at most $2 \cdot (\log N + 1)$ rounds, all nodes in $G_i$ have $\mathit{phase}_u = \mathit{CBT}$.  Therefore, in the worst case a node $u$ may receive a single edge from all neighbors from configuration $G_i$, resulting in a degree expansion of at most $2$.
\end{proof}

Combining the lemma above with Theorem \ref{thm:cbt_degree} gives us the following theorem.

\begin{theorem}
Algorithm \ref{algo:chord_scaffold} builds the $\textsc{Avatar}(\textsc{Chord})$ network with degree expansion of $\mathcal{O}(\log^2 N)$ in expectation.
\end{theorem}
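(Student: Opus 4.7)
The plan is to decompose the execution into three phases corresponding to the three possible values of $\mathit{phase}_u$, and to bound the degree expansion contribution of each phase separately. The phases to consider are: (i) the initial ``false \textsc{Chord}'' phase during which some nodes incorrectly believe the scaffold is complete and are executing Algorithm~\ref{algo:chord_scaffold}, (ii) the \textsc{Cbt} construction phase during which all nodes have $\mathit{phase}_u = \mathit{CBT}$, and (iii) the genuine \textsc{Chord} construction phase after a correct \textsc{Avatar}(\textsc{Cbt}) network has been built. The key observation is that only the first two phases can contribute edges that will not remain in the final configuration.

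First I would handle phase (iii): every edge added by Algorithm~\ref{algo:chord_scaffold} after a correct $\textsc{Avatar}(\textsc{Cbt})$ scaffold exists is, by inspection of the algorithm, a \textsc{Chord} finger of the target network. Such edges are required in the final configuration, so they do not increase the degree expansion ratio (which compares the maximum degree during convergence against the maximum of the initial and final degrees). Next I would appeal directly to Lemma~\ref{lemma:degree_expansion} to bound phase (i): if the initial configuration is neither a legal $\textsc{Avatar}(\textsc{Chord})$ nor a scaffolded \textsc{Chord} configuration, every node's degree grows by at most a factor of $2$ before all nodes have set $\mathit{phase}_u = \mathit{CBT}$. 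For phase (ii) I would invoke Theorem~\ref{thm:cbt_degree}, which gives an $\mathcal{O}(\log^2 N)$ expected degree expansion for the \textsc{Cbt} algorithm when starting from an arbitrary configuration in which every node is executing it.

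Finally I would combine the bounds: the degree of any node at the end of phase (i) is at most twice its initial degree; starting from that configuration, Theorem~\ref{thm:cbt_degree} further amplifies the maximum degree by an expected $\mathcal{O}(\log^2 N)$ factor; phase (iii) adds only edges present in the final topology. Taking the product yields $2 \cdot \mathcal{O}(\log^2 N) = \mathcal{O}(\log^2 N)$ in expectation, matching the claimed bound.

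I expect the main subtlety, rather than any hard calculation, to be the careful accounting of which baseline the degree expansion is measured against when the phases compose. In particular, I would want to verify that Theorem~\ref{thm:cbt_degree} can be applied to the configuration reached at the end of phase (i) — whose maximum degree is only a constant factor above the initial maximum — so that the resulting expansion is still measured relative to a configuration whose degree is $\mathcal{O}(1)$ times the true initial degree, preserving the overall polylogarithmic ratio. Once this bookkeeping is spelled out, the rest of the argument is a direct combination of Lemma~\ref{lemma:degree_expansion} with Theorem~\ref{thm:cbt_degree}.
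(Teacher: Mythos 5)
Your proposal follows essentially the same route as the paper: it decomposes the execution into the false-\textsc{Chord}, \textsc{Cbt}, and genuine \textsc{Chord}-construction phases, bounds the first by Lemma~\ref{lemma:degree_expansion} (a factor of $2$), the second by Theorem~\ref{thm:cbt_degree} ($\mathcal{O}(\log^2 N)$ in expectation), and argues the third contributes nothing beyond final-configuration edges. The one detail the paper handles that you gloss over is that phase~(iii) \emph{does} add edges not present in the final topology --- the edges to guest nodes $0$ and $N-1$ that are forwarded up the tree during the first feedback wave --- but since each node acquires at most a constant number of such temporary edges, this only contributes another constant factor and the asymptotic bound is unchanged.
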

\begin{proof}
By Lemma \ref{lemma:degree_expansion}, the degree expansion from Algorithm \ref{algo:chord_scaffold} is at most $2$ before the $\textsc{Avatar}(\textsc{Cbt})$ algorithm is executed.  By Theorem \ref{thm:cbt_degree}, this algorithm has a degree expansion of $\mathcal{O}(\log^2 N)$ in expectation.  When executing from a scaffolded \textsc{Chord} configuration, the only edge added by Algorithm \ref{algo:chord_scaffold} that is not an edge in the final configuration is the forwarding of an edge to node $0$ and $N-1$.  Therefore, once the correct $\textsc{Avatar}(\textsc{Cbt})$ network is built, Algorithm \ref{algo:chord_scaffold} builds $\textsc{Avatar}(\textsc{Chord})$ with degree expansion of at most 2.  Therefore, from any configuration Algorithm \ref{algo:chord_scaffold} may increase degrees by a factor of at most $2$, the original \textsc{Cbt} algorithm may increase degrees by $c \cdot \log^2 N$, giving us $\mathcal{O}(\log^2 N)$ degree expansion.
\end{proof}
\end{document}